\newtheorem{theorem}{Theorem}
\newtheorem{remark}{Remark}
\newtheorem{lemma}{Lemma}
\newtheorem{definition}{Definition}
\long\def\symbolfootnote[#1]#2{\begingroup%
\def\thefootnote{\fnsymbol{footnote}}\footnote[#1]{#2}\endgroup}
\begin{document}

\title{Functional-Decode-Forward for the General Discrete Memoryless Two-Way Relay Channel}
\author{\IEEEauthorblockN{Lawrence Ong, Christopher M. Kellett, and Sarah J. Johnson}
\IEEEauthorblockA{School of Electrical Engineering and Computer Science, 
The University of Newcastle\\
Email: lawrence.ong@cantab.net; \{chris.kellett, sarah.johnson\}@newcastle.edu.au}
}
\maketitle

\begin{abstract}
We consider the general discrete memoryless two-way relay channel, where two users exchange messages via a relay, and propose two functional-decode-forward coding strategies for this channel. Functional-decode-forward involves the relay decoding a function of the users' messages rather than the individual messages themselves. This function is then broadcast back to the users, which can be used in conjunction with the user's own message to decode the other user's message. Via a numerical example, we show that functional-decode-forward with linear codes is capable of achieving strictly larger sum rates than those achievable by other strategies.
\end{abstract}

\IEEEpeerreviewmaketitle

\section{Introduction}\label{sec:introduction}\symbolfootnote[0]{This work is supported by the Australian Research Council under grants DP087725 and DP1093114.}
We obtain two new achievable rate regions for the general discrete memoryless two-way relay channel (TWRC), in which two users exchange messages through a relay. We consider TWRCs with no direct link between the users (see Fig.~\ref{fig:twrc}). The new rate regions are obtained using the idea of \emph{functional-decode-forward} (FDF), where the relay only decodes a function of the users' messages or codewords without needing to decode the messages or codewords themselves (hence saving the \emph{uplink} bandwidth from the users to the relay). The relay then broadcasts the function to both users. The function must be defined such that knowing its own message, each user is able to decode the message sent by the other user. 

We first illustrate the concept of FDF using the \emph{noiseless} binary adder TWRC as an example, where nodes 1 and 2 (the users) exchange data through node 3 (the relay). Let $X_i \in \{0,1\}$ be node $i$'s transmitted signal and $Y_i \in \{0,1\}$ be node $i$'s received signal. The noiseless binary adder TWRC is defined as follows: (i) the uplink is $Y_3 = X_1 + X_2 \mod 2$, and (ii) the downlink is $Y_1 = X_3$ and $Y_2 = X_3$.
Assume that the source messages are in bits, i.e. $W_1, W_2 \in \{0,1\}$. The well-known optimal (rate-maximizing) coding strategy is for the users to transmit uncoded information bits, i.e., $X_i=W_i$, for  $i \in \{1,2\}$, and for the relay to forward its received bits, i.e., $Y_3=X_3$. Having received $Y_1$ which is $W_1 + W_2 \mod 2$, and knowing its own message $W_1$, node 1 can recover $W_2$ perfectly. Node 2 can recover $W_1$ similarly. Here, the capacity of 1 bit/channel use is achievable using this strategy.

\begin{figure}[t]
\centering
\resizebox{5.2cm}{!}{
\begin{picture}(0,0)%
\includegraphics{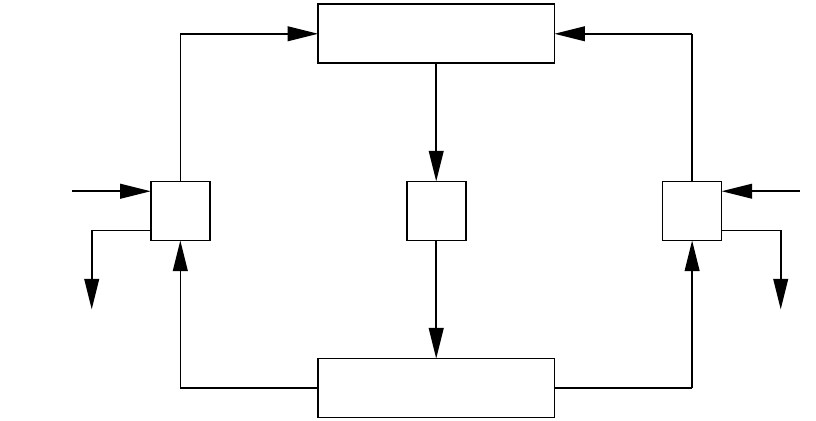}%
\end{picture}%
\setlength{\unitlength}{4144sp}%
\begingroup\makeatletter\ifx\SetFigFont\undefined%
\gdef\SetFigFont#1#2#3#4#5{%
  \fontsize{#1}{#2pt}%
  \fontfamily{#3}\fontseries{#4}\fontshape{#5}%
  \selectfont}%
\fi\endgroup%
\begin{picture}(3720,1914)(391,-1873)
\put(3511,-961){\makebox(0,0)[lb]{\smash{{\SetFigFont{12}{14.4}{\familydefault}{\mddefault}{\updefault}{\color[rgb]{0,0,0}$2$}%
}}}}
\put(2341,-961){\makebox(0,0)[lb]{\smash{{\SetFigFont{12}{14.4}{\familydefault}{\mddefault}{\updefault}{\color[rgb]{0,0,0}$3$}%
}}}}
\put(1171,-961){\makebox(0,0)[lb]{\smash{{\SetFigFont{12}{14.4}{\familydefault}{\mddefault}{\updefault}{\color[rgb]{0,0,0}$1$}%
}}}}
\put(1891,-151){\makebox(0,0)[lb]{\smash{{\SetFigFont{12}{14.4}{\familydefault}{\mddefault}{\updefault}{\color[rgb]{0,0,0}$p^*(y_3|x_1,x_2)$}%
}}}}
\put(1891,-1771){\makebox(0,0)[lb]{\smash{{\SetFigFont{12}{14.4}{\familydefault}{\mddefault}{\updefault}{\color[rgb]{0,0,0}$p^*(y_1,y_2|x_3)$}%
}}}}
\put(721,-1636){\makebox(0,0)[lb]{\smash{{\SetFigFont{12}{14.4}{\familydefault}{\mddefault}{\updefault}{\color[rgb]{0,0,0}$\hat{W}_2$}%
}}}}
\put(3871,-1636){\makebox(0,0)[lb]{\smash{{\SetFigFont{12}{14.4}{\familydefault}{\mddefault}{\updefault}{\color[rgb]{0,0,0}$\hat{W}_1$}%
}}}}
\put(2431,-556){\makebox(0,0)[lb]{\smash{{\SetFigFont{12}{14.4}{\familydefault}{\mddefault}{\updefault}{\color[rgb]{0,0,0}$Y_3$}%
}}}}
\put(2431,-1321){\makebox(0,0)[lb]{\smash{{\SetFigFont{12}{14.4}{\familydefault}{\mddefault}{\updefault}{\color[rgb]{0,0,0}$X_3$}%
}}}}
\put(1261,-646){\makebox(0,0)[lb]{\smash{{\SetFigFont{12}{14.4}{\familydefault}{\mddefault}{\updefault}{\color[rgb]{0,0,0}$X_1$}%
}}}}
\put(3241,-646){\makebox(0,0)[lb]{\smash{{\SetFigFont{12}{14.4}{\familydefault}{\mddefault}{\updefault}{\color[rgb]{0,0,0}$X_2$}%
}}}}
\put(3241,-1411){\makebox(0,0)[lb]{\smash{{\SetFigFont{12}{14.4}{\familydefault}{\mddefault}{\updefault}{\color[rgb]{0,0,0}$Y_2$}%
}}}}
\put(1261,-1411){\makebox(0,0)[lb]{\smash{{\SetFigFont{12}{14.4}{\familydefault}{\mddefault}{\updefault}{\color[rgb]{0,0,0}$Y_1$}%
}}}}
\put(4096,-916){\makebox(0,0)[lb]{\smash{{\SetFigFont{12}{14.4}{\familydefault}{\mddefault}{\updefault}{\color[rgb]{0,0,0}$W_2$}%
}}}}
\put(406,-916){\makebox(0,0)[lb]{\smash{{\SetFigFont{12}{14.4}{\familydefault}{\mddefault}{\updefault}{\color[rgb]{0,0,0}$W_1$}%
}}}}
\end{picture}%

}
\caption{The general discrete memoryless TWRC considered in this paper}
\label{fig:twrc}
\end{figure}

While the bit-wise modulo-two addition of the users' messages seems to be a good function for the relay to transmit, the main challenge of FDF on a \emph{noisy} TWRC lies in:
\begin{itemize}
\item selecting a \emph{good function} of the users' messages/codewords which the relay should decode, and
\item constructing \emph{good codes} for the users such that the relay can efficiently decode this function without needing to decode the individual users' messages/codewords.
\end{itemize}

In the case of \emph{adder channels}, e.g.,  $Y_3 = X_1 + X_2 + N$, where $N$ is the channel noise, linear codes can be used (see \cite{ongjohnsonkellett10cl}  for the case of binary adder channels, \cite{ong10iccs} for finite field adder channels, and \cite{narayananwilson07,namchung08} for AWGN channels). Let $\boldsymbol{X}_i$ be user $i$'s length-$n$ linear codeword\footnote{Bold letters are used to denote a block of $n$ channel uses, e.g. $\boldsymbol{X}_i = (X_i[1], X_i[2], \dotsc, X_i[n])$, where $X_i[t]$ is $X_i$ on the $t$-th channel use.}, for $i \in \{1,2\}$. The structure of linear codes guarantees that $\boldsymbol{U} \triangleq (\boldsymbol{X}_1 + \boldsymbol{X}_2)$ is a codeword from the same code. The relay effectively receives $\boldsymbol{Y}_3 = \boldsymbol{U} + \boldsymbol{N}$, which is a noisy version of $\boldsymbol{U}$.  Capacity-achieving linear codes have been shown to exist for this type of additive noise channel. This means if the users transmit using these linear codes, then the relay is able to efficiently decode $\boldsymbol{U}$ (which is a function of the users' codewords) without having to decode the users' codewords individually. The relay then broadcasts $\boldsymbol{U}$ to the users, and each user can obtain the other user's message from $\boldsymbol{U}$ and its own message/codeword.

For the above adder channels, the channels actually perform the desired function by adding the users' codewords. For FDF on \emph{general} discrete memoryless TWRCs in which the channels do not ``help'', it is not immediately obvious what function the relay should decode, and how the relay can decode the function without first decoding the individual messages.

In this paper, we use random linear codes for FDF on the general discrete memoryless TWRC following the idea in \cite{nazergastpar08allerton} for the multiple-access channel, i.e., the users transmit randomly generated linear codewords on the uplink. Although the uplink output $Y_3$ cannot be written as a (noisy) function of $X_1 + X_2$, by invoking the Markov Lemma we will prove that the relay is still able to \emph{reliably} (i.e., with arbitrarily small error probability) decode $\boldsymbol{X}_1 + \boldsymbol{X}_2$ without needing to decode the individual messages/codewords. The relay then broadcasts $\boldsymbol{X}_1 + \boldsymbol{X}_2$ to the users for each of them to obtain the other user's message. We call this strategy functional-decode-forward with linear codes (FDF-L).

Another method for the relay to decode a function of the users' messages in the general discrete memoryless TWRC is by using \emph{systematic computation codes}~\cite{nazergastpar07} on the uplink. On the uplink, the users first send uncoded data, followed by linear-coded signals. After the relay decodes a function of the users' messages, the downlink transmission is the same as that in FDF-L. We call this strategy functional-decode-forward with systematic computation codes (FDF-S).

We will first derive two achievable rate regions for the general discrete memoryless TWRC, using FDF-L and FDF-S. We will then show, using an example, that FDF-L can achieve higher sum rates than those achievable by FDF-S and by existing coding strategies for the TWRC, including (i) the \emph{complete-decode-forward} (CDF) coding strategy\footnote{The strategy is commonly known as decode-and-forward or decode-forward. We modified the name of this strategy here to reflect that the relay completely decodes both the users' messages before forwarding them.}, where the relay fully decodes the messages from both users, re-encodes and broadcasts a function of the messages back to the users~\cite{knopp06,kimmitrantarokh08}, and (ii) the \emph{compress-forward} (CF) coding strategy, where the relay quantizes its received signals, re-encodes and broadcasts the quantized signals to users~\cite{schnurroechtering07}. 


\section{Channel Model} \label{sec:model}


Fig.~\ref{fig:twrc} depicts the general discrete memoryless TWRC considered in the paper, where users 1 and 2 exchange data through the relay (node 3). We denote by $X_i\in \mathcal{X}_i$ the channel input from node $i$, $Y_i\in\mathcal{Y}_i$ the channel output received by node $i$, and $W_i\in\mathcal{W}_i$ user $i$'s message. The TWRC can be completely defined by (i) the uplink channel $p^*(y_3|x_1,x_2)$, and (ii) the downlink channel $p^*(y_1,y_2|x_3)$.

Let $W_i \in \left\{1,2,\dotsc,2^{nR_i}\right\}$ be an $(nR_i)$-bit message, for $i \in \{1,2\}$. Consider on each uplink and downlink, $n$ channel uses. User $i$ transmits $\boldsymbol{X}_i(W_i) = f_i(W_i)$, for $i \in \{1,2\}$. At any time, the relay transmits a function of its previously received signals, i.e., $X_3[t] = f_{3,t}(Y_3[1],Y_3[2], \dots, Y_3[t-1])$, for $t\in \{1,2,\dotsc,n\}$. After $n$ channel uses, each user estimates the message of the other user from its received signals and its own message, i.e., $\hat{W}_2 = g_1(\boldsymbol{Y}_1,W_1)$ and $\hat{W}_1 = g_2(\boldsymbol{Y}_2, W_2)$ for users 1 and 2 respectively. $\hat{W}_2$ is node 1's estimate of $W_2$, and vice versa. Assuming that the message pair $(W_1,W_2)$ is uniformly distributed in $\mathcal{W}_1 \times \mathcal{W}_2$, a rate pair $(R_1,R_2)$ is said to be \emph{achievable} if each user can reliably decode the messages of the other. We say that a user can reliably decode a message if the probability that it wrongly decodes the message can be made arbitrarily small.

\section{Fields and Linear Codes} \label{sec:fields}

Now, we will present a construction of \emph{random}
linear codes with elements from finite fields. Random linear codes will be used for the users to transmit their respective messages to the relay.  Using random linear codes, any two codewords are statistically pair-wise independent. This property is important for proving reliable communications. Furthermore, as mentioned in Sec.~\ref{sec:introduction}, the structure of linear codes enables the relay to decode the desired function of the user's codewords without needing to decode the individual codewords or  messages.

Let $\mathcal{F}$ be a finite field with associated operations of addition $\oplus$ and multiplication $\odot$. Consider the following codeword generating function that maps a message $\boldsymbol{s} \in \mathcal{F}^k$ to a codeword $\boldsymbol{x} \in \mathcal{F}^n$:
\vspace{-1.5ex}
\begin{equation}
\boldsymbol{x} = (\boldsymbol{s} \odot \mathbb{G})  \oplus \boldsymbol{q}, \label{eq:linear-codes-def-1}
\end{equation}
where $\boldsymbol{x}$ is a row vector of length $n$, $\boldsymbol{s}$ is a row vector of length $k$, $\mathbb{G}$ is a fixed $k$-by-$n$ matrix, with each element independently and uniformly chosen over $\mathcal{F}$, and $\boldsymbol{q}$ is a fixed row vector of length $n$, with each element independently and uniformly chosen over $\mathcal{F}$.
We extend Gallager's results for binary linear codes \cite[p. 207]{gallager68} to finite-field linear codes in the following two lemmas (see \cite{ong10iccs} for the proofs):

\begin{lemma}\label{lemma:linear-codes-1}
Consider the linear codes defined in \eqref{eq:linear-codes-def-1}. Over the ensemble of codes, the probability that a message $\boldsymbol{s}_1$ is mapped to a given codeword $\boldsymbol{x}_1$ is $p(\boldsymbol{x}_1)=|\mathcal{F}|^{-n}$.
\end{lemma}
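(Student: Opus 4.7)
The plan is to condition on the random generator matrix $\mathbb{G}$ and then exploit the uniformity of the independent dither $\boldsymbol{q}$. First, I would fix an arbitrary realization $\mathbb{G} = G$. Since $\boldsymbol{s}_1$ is a deterministic message and $G$ is now fixed, the product $\boldsymbol{s}_1 \odot G$ is a fixed (non-random) vector in $\mathcal{F}^n$. The event $\{\boldsymbol{x}_1 = (\boldsymbol{s}_1 \odot G) \oplus \boldsymbol{q}\}$ is then equivalent, using the additive group structure of $\mathcal{F}$, to the event $\{\boldsymbol{q} = \boldsymbol{x}_1 \ominus (\boldsymbol{s}_1 \odot G)\}$, where $\ominus$ is subtraction in $\mathcal{F}$.

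Second, I would invoke the hypothesis that each coordinate of $\boldsymbol{q}$ is independently and uniformly distributed over $\mathcal{F}$, so that $\boldsymbol{q}$ is uniform on $\mathcal{F}^n$. Consequently, for each fixed target vector in $\mathcal{F}^n$, the probability that $\boldsymbol{q}$ equals that target is exactly $|\mathcal{F}|^{-n}$. Hence $\Pr[\boldsymbol{x}_1 = (\boldsymbol{s}_1 \odot \mathbb{G}) \oplus \boldsymbol{q} \mid \mathbb{G} = G] = |\mathcal{F}|^{-n}$ for every $G$.

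Finally, since the conditional probability above does not depend on $G$, marginalizing over the distribution of $\mathbb{G}$ (independent of $\boldsymbol{q}$) preserves the value, giving $p(\boldsymbol{x}_1) = |\mathcal{F}|^{-n}$. There is no real obstacle here: the role of $\boldsymbol{q}$ is precisely to act as a uniform translation that randomizes the codebook so that pairwise independence and marginal uniformity hold for every message, independently of any algebraic properties of $\mathbb{G}$ or of $\boldsymbol{s}_1$ (in particular, the argument works even when $\boldsymbol{s}_1$ is the all-zero message, for which $\boldsymbol{s}_1 \odot \mathbb{G}$ would otherwise be identically zero). The entire proof is essentially a one-line ``dithering'' argument, and the result is a natural finite-field analogue of Gallager's binary statement.
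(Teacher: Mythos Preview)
Your proof is correct. The paper does not actually prove this lemma in-text (it refers to \cite{ong10iccs} for the proofs of Lemmas~\ref{lemma:linear-codes-1} and~\ref{lemma:linear-codes-2}), but the argument you give---condition on $\mathbb{G}$, then observe that for each fixed $G$ exactly one value of $\boldsymbol{q}$ produces the target codeword, so the uniform dither forces probability $|\mathcal{F}|^{-n}$---is precisely the counting the paper uses in its proof of the closely related Lemma~\ref{lemma:two-linear-codes}, and is the standard Gallager-style dithering argument.
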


\begin{lemma}\label{lemma:linear-codes-2}
Consider the linear codes defined in \eqref{eq:linear-codes-def-1}. Let $\boldsymbol{s}_1$ and $\boldsymbol{s}_2$ be any two different messages. The corresponding codewords $\boldsymbol{x}_1 = (\boldsymbol{s}_1 \odot \mathbb{G}) \oplus \boldsymbol{q}$ and $\boldsymbol{x}_2 = (\boldsymbol{s}_2 \odot \mathbb{G}) \oplus \boldsymbol{q}$ are independent.
\end{lemma}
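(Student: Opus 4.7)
The plan is to compute the joint distribution $p(\boldsymbol{x}_1, \boldsymbol{x}_2)$ of the two codewords directly and check that it factors as $|\mathcal{F}|^{-2n}$; combined with Lemma~\ref{lemma:linear-codes-1}, which says each marginal is uniform, this is exactly independence. Conditioning on the realisation of $\mathbb{G}$, the event $\{\boldsymbol{x}_1 = \boldsymbol{a}\}$ is equivalent to $\{\boldsymbol{q} = \boldsymbol{a} \ominus (\boldsymbol{s}_1 \odot \mathbb{G})\}$, a single value of $\boldsymbol{q}$; because $\boldsymbol{q}$ is drawn uniformly on $\mathcal{F}^n$ independently of $\mathbb{G}$, this contributes a factor $|\mathcal{F}|^{-n}$. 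Substituting this forced $\boldsymbol{q}$ into the definition of $\boldsymbol{x}_2$ eliminates $\boldsymbol{q}$ entirely and reduces the second event to the linear constraint $\boldsymbol{d} \odot \mathbb{G} = \boldsymbol{b} \ominus \boldsymbol{a}$, where $\boldsymbol{d} \triangleq \boldsymbol{s}_2 \ominus \boldsymbol{s}_1 \in \mathcal{F}^k$ is nonzero by hypothesis.

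The crux is then to show that, over the random choice of $\mathbb{G}$, the vector $\boldsymbol{d} \odot \mathbb{G}$ is uniformly distributed on $\mathcal{F}^n$. I would argue column by column: the columns of $\mathbb{G}$ are mutually independent, so the coordinates of $\boldsymbol{d} \odot \mathbb{G}$ are independent as well, and it suffices to verify uniformity of one coordinate $\bigoplus_{i=1}^{k} d_i \odot G_{i,j}$. Pick any index $i^{\ast}$ with $d_{i^{\ast}} \neq 0$; the remaining terms form a random element of $\mathcal{F}$ independent of $G_{i^{\ast}, j}$, so it is enough to observe that in a finite field both translation by a constant and multiplication by a nonzero element are bijections on $\mathcal{F}$, hence carry the uniform distribution of $G_{i^{\ast}, j}$ to the uniform distribution on $\mathcal{F}$.

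Putting the pieces together, summing over $\mathbb{G}$ gives
\begin{equation*}
p(\boldsymbol{x}_1 = \boldsymbol{a}, \boldsymbol{x}_2 = \boldsymbol{b}) = |\mathcal{F}|^{-n} \cdot \Pr[\boldsymbol{d} \odot \mathbb{G} = \boldsymbol{b} \ominus \boldsymbol{a}] = |\mathcal{F}|^{-2n},
\end{equation*}
which by Lemma~\ref{lemma:linear-codes-1} equals $p(\boldsymbol{x}_1) p(\boldsymbol{x}_2)$, establishing independence. The only step requiring care is the uniformity of $\boldsymbol{d} \odot \mathbb{G}$; once the finite-field bijection argument above is in place, the rest is bookkeeping. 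The hypothesis $\boldsymbol{s}_1 \neq \boldsymbol{s}_2$ is used exactly once, to ensure $\boldsymbol{d} \neq \boldsymbol{0}$ so that the uniformity argument has a nonzero coefficient to work with.
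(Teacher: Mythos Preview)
Your argument is correct. The paper itself does not prove Lemma~\ref{lemma:linear-codes-2}; it merely cites the companion paper \cite{ong10iccs} for the proof. The closest in-paper analogue is the proof of Lemma~\ref{lemma:two-linear-codes}, which proceeds by the same ``compute the joint probability and check it equals $|\mathcal{F}|^{-2n}$'' route you take, but via a raw count of admissible $(\mathbb{G},\boldsymbol{q}_1,\boldsymbol{q}_2)$ triples rather than your conditioning-on-$\mathbb{G}$ probabilistic phrasing. Applied to Lemma~\ref{lemma:linear-codes-2}, that counting style would amount to showing that the number of $(\mathbb{G},\boldsymbol{q})$ pairs sending $\boldsymbol{s}_1\mapsto\boldsymbol{x}_1$ and $\boldsymbol{s}_2\mapsto\boldsymbol{x}_2$ is $|\mathcal{F}|^{n(k-1)}$, which in turn reduces to exactly the step you isolate: the linear map $\mathbb{G}\mapsto\boldsymbol{d}\odot\mathbb{G}$ is surjective onto $\mathcal{F}^n$ when $\boldsymbol{d}\neq\boldsymbol{0}$. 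Your column-by-column bijection argument for the uniformity of $\boldsymbol{d}\odot\mathbb{G}$ is the standard and cleanest way to establish this, and the use of the hypothesis $\boldsymbol{s}_1\neq\boldsymbol{s}_2$ is correctly pinpointed. In short, the approaches are essentially the same; yours simply makes the key uniformity step explicit where the paper's style (as in Lemma~\ref{lemma:two-linear-codes}) tends to leave it implicit.
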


Besides the above extensions of Gallager's results for binary linear codes, we have the following additional result.
\begin{lemma}\label{lemma:two-linear-codes}
Consider two linear codes: $\boldsymbol{x} = \boldsymbol{s}_1 \odot \mathbb{G} \oplus \boldsymbol{q}_1$, and $\boldsymbol{v} = \boldsymbol{s}_2 \odot \mathbb{G} \oplus \boldsymbol{q}_2$, where $\mathbb{G}$, $\boldsymbol{q}_1$, and $\boldsymbol{q}_2$ are independently generated according to the uniform distribution. Any two codewords, one from each code, are independent.
\end{lemma}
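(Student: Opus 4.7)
The plan is to show that the joint distribution of $\boldsymbol{x}$ and $\boldsymbol{v}$ over the random code ensemble factorizes into the product of the marginals, each of which is already known to be uniform on $\mathcal{F}^n$ by Lemma~\ref{lemma:linear-codes-1}. The key observation is that although the two codes share the same random generator matrix $\mathbb{G}$, the two dither vectors $\boldsymbol{q}_1$ and $\boldsymbol{q}_2$ are drawn independently and uniformly from $\mathcal{F}^n$, and this independent additive randomization is what breaks any statistical coupling between $\boldsymbol{x}$ and $\boldsymbol{v}$ induced by the shared $\mathbb{G}$.

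First I would condition on the event $\mathbb{G} = G$ for an arbitrary fixed matrix $G$. Given $G$, the products $\boldsymbol{s}_1 \odot G$ and $\boldsymbol{s}_2 \odot G$ are deterministic vectors in $\mathcal{F}^n$, so $\boldsymbol{x}$ reduces to a fixed shift of $\boldsymbol{q}_1$ and $\boldsymbol{v}$ to a fixed shift of $\boldsymbol{q}_2$. Since $\boldsymbol{q}_1$ and $\boldsymbol{q}_2$ are independent and each uniform on $\mathcal{F}^n$, and shifting a uniform random vector by a constant preserves uniformity in a finite abelian group, the conditional marginals $p(\boldsymbol{x}=\boldsymbol{a}\mid \mathbb{G}=G)$ and $p(\boldsymbol{v}=\boldsymbol{b}\mid \mathbb{G}=G)$ are each $|\mathcal{F}|^{-n}$, and the independence of $\boldsymbol{q}_1,\boldsymbol{q}_2$ gives conditional independence:
\begin{equation}
p(\boldsymbol{x}=\boldsymbol{a},\boldsymbol{v}=\boldsymbol{b}\mid \mathbb{G}=G) = |\mathcal{F}|^{-2n}
\end{equation}
for every $\boldsymbol{a},\boldsymbol{b}\in\mathcal{F}^n$.

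Because the right-hand side does not depend on $G$, averaging over the distribution of $\mathbb{G}$ yields $p(\boldsymbol{x}=\boldsymbol{a},\boldsymbol{v}=\boldsymbol{b}) = |\mathcal{F}|^{-2n} = p(\boldsymbol{x}=\boldsymbol{a})\,p(\boldsymbol{v}=\boldsymbol{b})$, which is exactly the claimed independence. I do not anticipate a serious obstacle: the only point that needs care is recognizing that the shared $\mathbb{G}$ looks at first as if it could couple the two codewords, and that the independent uniform dithers $\boldsymbol{q}_1,\boldsymbol{q}_2$ eliminate this coupling regardless of whether $\boldsymbol{s}_1$ and $\boldsymbol{s}_2$ are equal, distinct, or even deterministically related. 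Note that the argument does not require the two codes to have the same block length or rate, and it is essentially the finite-field analogue of the standard one-time-pad calculation used in the binary case of \cite{gallager68}.
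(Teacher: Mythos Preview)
Your proof is correct and follows essentially the same route as the paper's: both fix (or enumerate over) $\mathbb{G}=G$, observe that for each $G$ there is exactly one pair $(\boldsymbol{q}_1,\boldsymbol{q}_2)$ yielding any prescribed $(\boldsymbol{x},\boldsymbol{v})$, and hence $p(\boldsymbol{x},\boldsymbol{v})=|\mathcal{F}|^{-2n}=p(\boldsymbol{x})p(\boldsymbol{v})$. Your closing aside that the argument ``does not require the two codes to have the same block length or rate'' is slightly off, since sharing the same $\mathbb{G}$ already forces identical $k$ and $n$, but this does not affect the proof itself.
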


\begin{proof}
From Lemma~\ref{lemma:linear-codes-1}, we know that $p(\boldsymbol{x})=|\mathcal{F}|^{-n}$ and $p(\boldsymbol{v})=|\mathcal{F}|^{-n}$. We have to show that $p(\boldsymbol{x},\boldsymbol{v})=|\mathcal{F}|^{-2n}$. Elements in $\mathbb{G}$, $\boldsymbol{q}_1$, and $\boldsymbol{q}_2$ are independent and uniformly distributed, and so each $(\mathbb{G},\boldsymbol{q}_1,\boldsymbol{q}_2)$ has a probability of $|\mathcal{F}|^{-n(k+2)}$ of being selected. For any given $\mathbb{G}$, there is only one $\boldsymbol{q}_1$ and one $\boldsymbol{q}_2$ that results in the given $\boldsymbol{x}$ and $\boldsymbol{v}$. So, there are only $|\mathcal{F}|^{-nk}$ different $(\mathbb{G},\boldsymbol{q}_1,\boldsymbol{q}_2)$ that map $\boldsymbol{s}_1$ to $\boldsymbol{x}$, and $\boldsymbol{s}_2$ to $\boldsymbol{v}$. So, $p(\boldsymbol{x},\boldsymbol{v})= |\mathcal{F}|^{-nk}|\mathcal{F}|^{-n(k+2)}=  |\mathcal{F}|^{-2n}$.
\end{proof}

\begin{remark}
The use of dither $\boldsymbol{q}$ in the above-defined linear codes of the form \eqref{eq:linear-codes-def-1} is essential in proving Lemmas~\ref{lemma:linear-codes-1} and \ref{lemma:two-linear-codes}.
\end{remark}

\section{Two New Achievable Rate Regions} \label{sec:rate_region}

\subsection{Functional-Decode-Forward with Linear Codes (FDF-L)}

We first prove the following achievable rate region for the discrete memoryless TWRC using FDF-L:

\begin{theorem}\label{theorem:achievability}
Consider a TWRC where $|\mathcal{X}_1| = |\mathcal{X}_2| = |\mathcal{F}|$, for some finite field $\mathcal{F}$. Rename\footnote{We choose a renaming scheme that maximizes $I(U;Y_3)$.}  the elements in $\mathcal{X}_1$ and $\mathcal{X}_2$ so that $\mathcal{X}_1 = \mathcal{X}_2 = \mathcal{F}$. The rate pair $(R_1,R_2)$ is achievable if
\vspace{-1.5ex}
\begin{equation}
R_1,R_2 \leq \min \{ I(U;Y_3), I(X_3;Y_1), I(X_3;Y_2) \},\label{eq:fdf-l}
\end{equation}
for
\vspace{-0.6ex}
\begin{equation}
p(u,y_3) = \frac{1}{|\mathcal{F}|^{2}} \sum_{\substack{x_1,x_2\\\text{s.t. }x_1\oplus x_2=u }} p^*(y_3|x_1,x_2), \label{eq:u-y3}
\end{equation}
where $U \in \mathcal{F}$, and for $p(x_3,y_1,y_2) = p(x_3)p^*(y_1,y_2|x_3)$.
\end{theorem}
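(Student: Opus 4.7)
My plan is to use the random linear ensemble of Section~\ref{sec:fields} on the uplink together with a common-message random code on the downlink. Both users employ the \emph{same} random generator matrix $\mathbb{G}\in\mathcal{F}^{k\times n}$ (choosing $\tfrac{k}{n}\log|\mathcal{F}|$ equal to the target rate, and padding the lower-rate user's message with zeros if $R_1\neq R_2$) together with independently drawn dithers $\boldsymbol{q}_1,\boldsymbol{q}_2$. After identifying $\mathcal{X}_1=\mathcal{X}_2=\mathcal{F}$ as in the theorem, user $i$ transmits $\boldsymbol{X}_i=\boldsymbol{s}_i\odot\mathbb{G}\oplus\boldsymbol{q}_i$. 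The key algebraic fact is that
\begin{equation*}
\boldsymbol{U}\triangleq\boldsymbol{X}_1\oplus\boldsymbol{X}_2=(\boldsymbol{s}_1\oplus\boldsymbol{s}_2)\odot\mathbb{G}\oplus(\boldsymbol{q}_1\oplus\boldsymbol{q}_2)
\end{equation*}
is again a codeword of the ensemble~\eqref{eq:linear-codes-def-1}, indexed by the combined message $\boldsymbol{s}_U\triangleq\boldsymbol{s}_1\oplus\boldsymbol{s}_2\in\mathcal{F}^k$. For the downlink I would generate an independent codebook of $|\mathcal{F}|^k$ i.i.d.\ codewords $\boldsymbol{X}_3(\boldsymbol{s}_U)$, one per value of $\boldsymbol{s}_U$, under some marginal $p(x_3)$.

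On the uplink, Lemma~\ref{lemma:linear-codes-1} makes each $\boldsymbol{X}_i$ uniform on $\mathcal{F}^n$ (so its symbols are i.i.d.\ uniform), and independence of the two dithers makes $\boldsymbol{X}_1$ independent of $\boldsymbol{X}_2$. Hence $(\boldsymbol{X}_1,\boldsymbol{X}_2,\boldsymbol{Y}_3)$ is i.i.d.\ with joint law $|\mathcal{F}|^{-2}p^*(y_3|x_1,x_2)$, whose $(U,Y_3)$-marginal is exactly~\eqref{eq:u-y3}. The relay then joint-typicality-decodes for $\boldsymbol{s}_U$: the true value is typical by the AEP, and for any $\tilde{\boldsymbol{s}}_U\neq\boldsymbol{s}_U$ Lemma~\ref{lemma:linear-codes-2} applied to the combined linear code makes $\boldsymbol{U}(\tilde{\boldsymbol{s}}_U)$ independent of $\boldsymbol{U}(\boldsymbol{s}_U)$ and hence of $\boldsymbol{Y}_3$. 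A standard $2^{-n(I(U;Y_3)-\delta)}$ typicality bound together with a union bound over the $|\mathcal{F}|^k-1$ alternatives drives the error probability to zero whenever the per-user rate is strictly less than $I(U;Y_3)$.

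The downlink is essentially a common-message broadcast: each receiver can reliably decode $\boldsymbol{s}_U$ from $\boldsymbol{Y}_i$ by point-to-point random coding provided its rate is below $I(X_3;Y_i)$, after which user $i$ peels off $\boldsymbol{s}_{3-i}=\boldsymbol{s}_U\oplus\boldsymbol{s}_i$. Combining the three constraints yields~\eqref{eq:fdf-l}. I expect the uplink analysis to be the subtle step, because $p^*(y_3|x_1,x_2)$ is generally \emph{not} a function of $u=x_1\oplus x_2$ alone, so one cannot treat the link as a point-to-point channel with input $U$ and invoke Shannon's theorem directly. The argument must instead live on the triple $(\boldsymbol{X}_1,\boldsymbol{X}_2,\boldsymbol{Y}_3)$, where Lemmas~\ref{lemma:linear-codes-1} and~\ref{lemma:linear-codes-2} place us in an i.i.d.\ setting, and then be projected onto $(\boldsymbol{U},\boldsymbol{Y}_3)$ via the Markov Lemma, which guarantees that a typical $(\boldsymbol{X}_1,\boldsymbol{X}_2)$ yields a jointly typical $(\boldsymbol{U},\boldsymbol{Y}_3)$ under~\eqref{eq:u-y3}.
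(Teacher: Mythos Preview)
Your proposal is correct and follows essentially the same route as the paper's own proof: the same random linear ensemble with a shared $\mathbb{G}$ and independent dithers on the uplink, joint-typicality decoding of $\boldsymbol{s}_U=\boldsymbol{s}_1\oplus\boldsymbol{s}_2$ at the relay via Lemmas~\ref{lemma:linear-codes-1}--\ref{lemma:two-linear-codes} and the Markov Lemma (to pass from typicality of $(\boldsymbol{X}_1,\boldsymbol{X}_2,\boldsymbol{Y}_3)$ to that of $(\boldsymbol{U},\boldsymbol{Y}_3)$), and a common-message random code on the downlink. The only cosmetic differences are that the paper invokes Lemma~\ref{lemma:two-linear-codes} explicitly for the independence of $\boldsymbol{X}_1$ and $\boldsymbol{X}_2$ (rather than arguing directly from the independent dithers) and recovers $W_{3-i}$ at the codeword level via $\boldsymbol{U}\oplus(-\boldsymbol{X}_i)$ rather than at the message level via $\boldsymbol{s}_U\oplus\boldsymbol{s}_i$.
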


\begin{remark}\label{remark:non-finite-field}
Recall that $\mathcal{F}$ is a finite field iff $|\mathcal{F}| = \ell ^z$ for some $\ell \in \mathcal{Z}_\text{P}$ (prime numbers) and some $z \in \mathcal{Z}_+$ (positive integers). For TWRCs where $|\mathcal{X}_1| \neq |\mathcal{X}_2|$ or $|\mathcal{X}_1| = |\mathcal{X}_2| \neq \ell^z$, $\forall \ell \in \mathcal{Z}_\text{P}$, $\forall z \in \mathcal{Z}_+$, we select subsets $\mathcal{X}_1' \subseteq \mathcal{X}_1$ and $\mathcal{X}_2' \subseteq \mathcal{X}_2$ such that $|\mathcal{X}_1'| = |\mathcal{X}_2'| = \ell^z$, for some $\ell \in \mathcal{Z}_\text{P}$, $z \in \mathcal{Z}_+$. The result in Theorem~\ref{theorem:achievability} holds for any discrete memoryless TWRC with $X_i$ replaced by $X_i' \in \mathcal{X}_i'$, for $i \in \{1,2\}$.
\end{remark}

\subsubsection{The Auxiliary Random Variable $U$}

The auxiliary random variable $U \in \mathcal{F}$ is the information that the relay recovers from its received signal $Y_3$, and broadcasts to both users 1 and 2. Before proceeding to the proof of Theorem~\ref{theorem:achievability}, we derive $p(u,y_3)$ in \eqref{eq:u-y3}. Define
\vspace{-0.2ex}
\begin{equation}
U \triangleq X_1 \oplus X_2. \label{definition-of-u}
\end{equation}
This means $p(u|x_1,x_2)=\mathbf{1}\big(u=x_1 \oplus x_2\big)$, where $\mathbf{1}(E)$ is 1 if the event $E$ is true, and is 0 otherwise. We can write
\vspace{-0.2ex}
\begin{equation}
p(u,x_1,x_2,y_3) = p(x_1,x_2)p^*(y_3|x_1,x_2) \mathbf{1}\big(u=x_1 \oplus x_2\big), \nonumber
\end{equation}
meaning $U - (X_1,X_2) - Y_3$ forms a Markov chain.

\begin{remark}
The function that the relay should decode, $U$, is not unique. 
Other functions are possible as long as each user can obtain the other user's message from the function and its own message.
\end{remark}

In FDF-L, we use the linear code structure
 in \eqref{eq:linear-codes-def-1} with the same $\mathbb{G}$, and independent $\boldsymbol{q}_1$ and $\boldsymbol{q}_2$ for users 1 and 2 respectively (see \eqref{eq:user-code-construction} for code construction). So, for any pair of source messages, from Lemma~\ref{lemma:two-linear-codes}, we have $p(x_1,x_2) = \frac{1}{|\mathcal{F}|^2}$, i.e., $X_1$ and $X_2$ are independent of each other and are uniformly distributed. Hence,
\vspace{-0.2ex}
\begin{subequations}
\begin{align}
p(u,y_3) &= \sum_{x_1,x_2} p(u,x_1,x_2,y_3)\\
&= \sum_{x_1,x_2} \frac{1}{|\mathcal{F}|^2} \mathbf{1}\big(u=x_1 \oplus x_2 \big) p^*(y_3|x_1,x_2)\\
&= \frac{1}{|\mathcal{F}|^2} \sum_{\substack{x_1,x_2\\\text{s.t. }x_1\oplus x_2=u }} p^*(y_3|x_1,x_2). \label{eq:p-u-y3}
\end{align}
\end{subequations}

\subsubsection{Proof of Theorem~\ref{theorem:achievability}}
Fig.~\ref{fig:coding} depicts the relationship among the random variables for FDF-L used to achieve the rates in Theorem~\ref{theorem:achievability}.

\noindent{\bf Uplink:}\\
\indent Assuming that $|\mathcal{W}_1| = |\mathcal{W}_2| = 2^{nR}$, on the uplink, for a sufficiently large $n$, we choose $k$ such that $|\mathcal{F}|^k = 2^{nR}$, and define a bijective mapping from $W_i \in \{1,2,\dotsc,2^{nR}\}$ to $\boldsymbol{S}_i \in \mathcal{F}^k$, for $i \in \{1,2\}$. The users transmit using linear codes of the form \eqref{eq:linear-codes-def-1}, i.e.,
\vspace{-0.4ex}
\begin{equation}
\boldsymbol{X}_i (W_i) = ( \boldsymbol{S}_i \odot \mathbb{G} ) \oplus \boldsymbol{q}_i, \label{eq:user-code-construction}
\end{equation}
for $i \in \{1,2\}$, where $\boldsymbol{S}_i \in \mathcal{F}^k$, $\mathbb{G} \in \mathcal{F}^{k \times n}$, and $\boldsymbol{X}_i, \boldsymbol{q}_i \in \mathcal{F}^n$. All elements in $\mathbb{G}$, $\boldsymbol{q}_1$, and $\boldsymbol{q}_2$  are uniformly and independently chosen over $\mathcal{F}$, and are fixed for all channel uses. 

From definition \eqref{definition-of-u} we have $\boldsymbol{U} \triangleq \boldsymbol{X}_1 \oplus \boldsymbol{X}_2 = (\boldsymbol{S}_{3}  \odot \mathbb{G})  \oplus \boldsymbol{q}_3$, which is also a codeword from a linear code of the form \eqref{eq:linear-codes-def-1}, where $\boldsymbol{S}_{3} = \boldsymbol{S}_1 \oplus \boldsymbol{S}_2$ is the ``message'' and $\boldsymbol{q}_3= \boldsymbol{q}_1 \oplus \boldsymbol{q}_2$. The relay estimates the codeword $\boldsymbol{U}(V_3)$ from its received signals $\boldsymbol{Y}_3$, where $V_3 \in \{1,\dotsc,2^{nR}\}$ is mapped (bijectively) to $\boldsymbol{S}_3 \in \mathcal{F}^k$ the same way as $W_i$ is to $\boldsymbol{S}_i$, for $i \in \{1,2\}$.

\begin{figure}[t]
\hspace{-2mm}
\resizebox{8.7cm}{!}{
\begin{picture}(0,0)%
\includegraphics{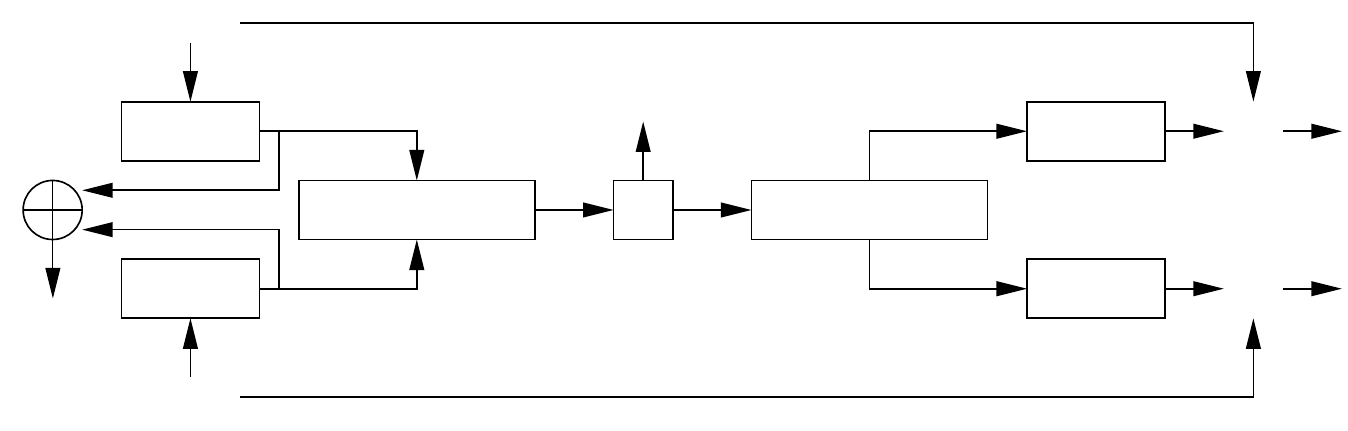}%
\end{picture}%
\setlength{\unitlength}{4144sp}%
\begingroup\makeatletter\ifx\SetFigFont\undefined%
\gdef\SetFigFont#1#2#3#4#5{%
  \reset@font\fontsize{#1}{#2pt}%
  \fontfamily{#3}\fontseries{#4}\fontshape{#5}%
  \selectfont}%
\fi\endgroup%
\begin{picture}(6150,1966)(-284,-4175)
\put(1126,-3211){\makebox(0,0)[lb]{\smash{{\SetFigFont{12}{14.4}{\familydefault}{\mddefault}{\updefault}{\color[rgb]{0,0,0}$p^*(y_3|x_1,x_2)$}%
}}}}
\put(2611,-3211){\makebox(0,0)[lb]{\smash{{\SetFigFont{12}{14.4}{\familydefault}{\mddefault}{\updefault}{\color[rgb]{0,0,0}$3$}%
}}}}
\put(3196,-3211){\makebox(0,0)[lb]{\smash{{\SetFigFont{12}{14.4}{\familydefault}{\mddefault}{\updefault}{\color[rgb]{0,0,0}$p^*(y_1,y_2|x_3)$}%
}}}}
\put(316,-2851){\makebox(0,0)[lb]{\smash{{\SetFigFont{12}{14.4}{\familydefault}{\mddefault}{\updefault}{\color[rgb]{0,0,0}Enc. $1$}%
}}}}
\put(316,-3571){\makebox(0,0)[lb]{\smash{{\SetFigFont{12}{14.4}{\familydefault}{\mddefault}{\updefault}{\color[rgb]{0,0,0}Enc. $2$}%
}}}}
\put(4456,-2851){\makebox(0,0)[lb]{\smash{{\SetFigFont{12}{14.4}{\familydefault}{\mddefault}{\updefault}{\color[rgb]{0,0,0}Dec. $1$}%
}}}}
\put(4456,-3571){\makebox(0,0)[lb]{\smash{{\SetFigFont{12}{14.4}{\familydefault}{\mddefault}{\updefault}{\color[rgb]{0,0,0}Dec. $2$}%
}}}}
\put(496,-2356){\makebox(0,0)[lb]{\smash{{\SetFigFont{12}{14.4}{\familydefault}{\mddefault}{\updefault}{\color[rgb]{0,0,0}$W_1$}%
}}}}
\put(496,-4111){\makebox(0,0)[lb]{\smash{{\SetFigFont{12}{14.4}{\familydefault}{\mddefault}{\updefault}{\color[rgb]{0,0,0}$W_2$}%
}}}}
\put(3826,-3706){\makebox(0,0)[lb]{\smash{{\SetFigFont{12}{14.4}{\familydefault}{\mddefault}{\updefault}{\color[rgb]{0,0,0}$\boldsymbol{Y}_2$}%
}}}}
\put(3826,-2716){\makebox(0,0)[lb]{\smash{{\SetFigFont{12}{14.4}{\familydefault}{\mddefault}{\updefault}{\color[rgb]{0,0,0}$\boldsymbol{Y}_1$}%
}}}}
\put(2521,-2716){\makebox(0,0)[lb]{\smash{{\SetFigFont{12}{14.4}{\familydefault}{\mddefault}{\updefault}{\color[rgb]{0,0,0}$\hat{\boldsymbol{U}}(\hat{V}_3)$}%
}}}}
\put(2206,-3526){\makebox(0,0)[lb]{\smash{{\SetFigFont{12}{14.4}{\familydefault}{\mddefault}{\updefault}{\color[rgb]{0,0,0}$\boldsymbol{Y}_3$}%
}}}}
\put(2746,-3526){\makebox(0,0)[lb]{\smash{{\SetFigFont{12}{14.4}{\familydefault}{\mddefault}{\updefault}{\color[rgb]{0,0,0}$\boldsymbol{X}_3(\hat{V}_3)$}%
}}}}
\put(-269,-3751){\makebox(0,0)[lb]{\smash{{\SetFigFont{12}{14.4}{\familydefault}{\mddefault}{\updefault}{\color[rgb]{0,0,0}$\boldsymbol{U}(V_3)$}%
}}}}
\put(1036,-2716){\makebox(0,0)[lb]{\smash{{\SetFigFont{12}{14.4}{\familydefault}{\mddefault}{\updefault}{\color[rgb]{0,0,0}$\boldsymbol{X}_1(W_1)$}%
}}}}
\put(1036,-3706){\makebox(0,0)[lb]{\smash{{\SetFigFont{12}{14.4}{\familydefault}{\mddefault}{\updefault}{\color[rgb]{0,0,0}$\boldsymbol{X}_2(W_2)$}%
}}}}
\put(5356,-2896){\makebox(0,0)[lb]{\smash{{\SetFigFont{12}{14.4}{\familydefault}{\mddefault}{\updefault}{\color[rgb]{0,0,0}$\hat{V}_3$}%
}}}}
\put(5356,-3616){\makebox(0,0)[lb]{\smash{{\SetFigFont{12}{14.4}{\familydefault}{\mddefault}{\updefault}{\color[rgb]{0,0,0}$\hat{V}_3$}%
}}}}
\put(5851,-2896){\makebox(0,0)[lb]{\smash{{\SetFigFont{12}{14.4}{\familydefault}{\mddefault}{\updefault}{\color[rgb]{0,0,0}$\hat{W}_2$}%
}}}}
\put(5851,-3616){\makebox(0,0)[lb]{\smash{{\SetFigFont{12}{14.4}{\familydefault}{\mddefault}{\updefault}{\color[rgb]{0,0,0}$\hat{W}_1$}%
}}}}
\end{picture}%
}
\caption{Relationship among the random variables in FDF-L}
\label{fig:coding}
\end{figure}

\begin{definition}
The jointly strongly $\delta$-typical set $T^n_{[XY]\delta}$ with respect to a distribution $p(x,y)$ on $\mathcal{X}\times\mathcal{Y}$ is the set of sequences $(\boldsymbol{x},\boldsymbol{y}) \in \mathcal{X}^n \times \mathcal{Y}^n$ such that $N(a,b;\boldsymbol{x},\boldsymbol{y})=0$ for $p(a,b)=0$, and $\sum_{a\in\mathcal{X}}\sum_{b\in\mathcal{Y}} \left\vert \frac{1}{n}N(a,b;\boldsymbol{x},\boldsymbol{y}) - p(x,y) \right\vert \leq \delta$,
where $N(a,b;\boldsymbol{x},\boldsymbol{y})$ is the number of occurrences of the pair of symbols $(a,b)$ in the pair of sequences $(\boldsymbol{x},\boldsymbol{y})$, $\delta$ is an arbitrarily small positive real number, and the sequences in $T^n_{[XY]\delta}$ are called strongly jointly $\delta$-typical sequences.
\end{definition}

The relay's estimate $\hat{V}_3$ is the \emph{unique} sequence $\boldsymbol{U}(\hat{V}_3)$ that is jointly strongly $\delta$-typical with its received sequence $\boldsymbol{Y}_3$, i.e., the relay finds $\hat{v}_3$ such that $\Big( \boldsymbol{U}(\hat{v}_3), \boldsymbol{Y}_3 \Big) \in T^n_{[UY_3]|\mathcal{F}|\delta},$ and such that there is no $v' \neq \hat{v}_3$ where $\Big( \boldsymbol{U}(v'), \boldsymbol{Y}_3 \Big) \in T^n_{[UY_3]|\mathcal{F}|\delta}$.

Now, we bound the probability that the relay wrongly decodes $V_3$.
Let $W_1=a_1$ and $W_2=a_2$ be the transmitted messages, and $V_3=b$ the corresponding index for $\boldsymbol{U}$. Let
\vspace{-0.5ex}
\begin{align}
E_0 &= \left\{ (\boldsymbol{U}(b), \boldsymbol{Y}_3) \notin T^n_{[UY_3]|\mathcal{F}|\delta} \right\}\\
E_1 &= \left\{ \exists v_3' \neq b: (\boldsymbol{U}(v_3'), \boldsymbol{Y}_3) \in T^n_{[UY_3]|\mathcal{F}|\delta} \right\}\\
E_2 &=  E_0 \cup E_1 .
\end{align}
$E_2$ is the event that node 3 wrongly decodes $V_3$. In addition, we define the following event:
\vspace{-0.5ex}
\begin{equation}
E_3 = \left\{ (\boldsymbol{X}_1(a_1),\boldsymbol{X}_2(a_2),\boldsymbol{Y}_3) \in T^n_{[X_1X_2Y_3]\delta} \right\}.
\end{equation}
$E_3$ is the event that the received signal $\boldsymbol{Y}_3$ is jointly strongly $\delta$-typical with the users' codewords $\boldsymbol{X}_1(a_1)$ and $\boldsymbol{X}_2(a_2)$.

Since the users' codewords $\boldsymbol{X}_1$ and $\boldsymbol{X}_2$ share the same $\mathbb{G}$ but have independently generated $\boldsymbol{q}_1$ and $\boldsymbol{q}_2$, from Lemma~\ref{lemma:two-linear-codes}, the codewords are independent. Furthermore, from Lemma~\ref{lemma:linear-codes-1}, the codeletters $\big\{X_i[t]:i \in \{1,2\}, t\in \{1,2,\dotsc,n\}\big\}$ are also independent. From \cite[Theorem 6.9]{yeung08}, for a sufficiently large $n$, we have $\Pr\{ E_3 \}  > 1 - \delta$, meaning that $(\boldsymbol{X}_1(a_1),\boldsymbol{X}_2(a_2),\boldsymbol{Y}_3)$ is jointly strongly $\delta$-typical with probability tending to one, by choosing a sufficiently small $\delta >0$.

So, for a sufficiently large $n$, $\Pr \{ E_0^c \}$ equals
\vspace{-0.5ex}
\begin{subequations}
\begin{align}
& \Pr \Big\{ (\boldsymbol{U}(b), \boldsymbol{Y}_3) \in T^n_{[UY_3]|\mathcal{F}|\delta} \Big\} \nonumber\\
&= \Pr \Big\{ E_3^c \Big\}\Pr \Big\{ (\boldsymbol{U}(b), \boldsymbol{Y}_3) \in T^n_{[UY_3]|\mathcal{F}|\delta} \Big| E_3^c \Big\} \nonumber\\
&\quad+ \Pr \Big\{ E_3 \Big\} \Pr \Big\{ (\boldsymbol{U}(b), \boldsymbol{Y}_3) \in T^n_{[UY_3]|\mathcal{F}|\delta} \Big| E_3 \Big\}\\
& > \alpha + (1-\delta)\Pr \Big\{ (\boldsymbol{U}(b), \boldsymbol{Y}_3) \in T^n_{[UY_3]|\mathcal{F}|\delta} \Big| E_3 \Big\}\\
&>  \alpha + (1-\delta)(1-\epsilon),\label{eq:markov-lemma}
\end{align}
\end{subequations}
for some arbitrarily small $\epsilon >0$, 
where $\alpha \triangleq \Pr \{ E_3^c \}\Pr \big\{ \boldsymbol{U}(b), \boldsymbol{Y}_3 \in T^n_{[UY_3]|\mathcal{F}|\delta} \big| E_3^c \big\} \leq  \Pr \{ E_3^c \} < \delta$. Eqn. \eqref{eq:markov-lemma} follows from the Markov Lemma~\cite[page 202 (Lemma 4.1)]{berger77} because $U - (X_1,X_2) - Y_3$ forms a Markov chain.

\begin{remark}
Note that $(\boldsymbol{X}_1,\boldsymbol{X}_2,\boldsymbol{Y}_3)$ being jointly strongly $\delta$-typical does not imply that $(\boldsymbol{U},\boldsymbol{Y}_3)$ is jointly strongly $\delta$-typical.  However, since $U - (X_1, X_2) - Y_3$ forms a Markov chain, invoking the Markov lemma yields that $(\boldsymbol{U}, \boldsymbol{Y}_3)$ is jointly strongly $\delta$-typical with probability tending to one.
\end{remark}

It follows that
\vspace{-0.5ex}
\begin{equation}
\Pr\{E_0\} = 1 - \{ E_0^c\} < \delta + \epsilon - \delta\epsilon - \alpha < \epsilon_0,
\end{equation}
for some arbitrarily small $\epsilon_0 > 0$, by choosing a sufficiently small $\delta$.

Now, from Lemma~\ref{lemma:linear-codes-2}, for any $v_3' \neq b$,  $\boldsymbol{U}(v_3')$ and $\boldsymbol{U}(b)$ are independent, and hence $\boldsymbol{U}(v_3')$ and $\boldsymbol{Y}_3$ are also independent. So, we have $\Pr \{ E_1 \}$ equals 
\vspace{-0.5ex}
\begin{subequations}
\begin{align}
& \Pr \Big\{ \exists v_3' \neq b: (\boldsymbol{U}(v_3'), \boldsymbol{Y}_3) \in T^n_{[UY_3]|\mathcal{F}|\delta} \Big\}\nonumber\\
&\leq \sum_{\substack{v_3' \in \{1,2,\dotsc,2^{nR}\} \setminus \{b\}}} \Pr \Big\{ (\boldsymbol{U}(v_3'), \boldsymbol{Y}_3) \in T^n_{[UY_3]|\mathcal{F}|\delta} \Big\} \label{eq:union-bound-2}\\
&= (2^{nR}-1)\Pr \Big\{( \boldsymbol{U}(v_3'), \boldsymbol{Y}_3) \in T^n_{[UY_3]|\mathcal{F}|\delta} \Big\}\\
&\leq  (2^{nR}-1) 2^{-n[I(U;Y_3)-\tau]} \label{eq:jaep2}\\
&< 2^{-n [ I(U;Y_3) - \tau - R]} \leq \epsilon_1,\label{eq:end}
\end{align}
\end{subequations}
for some arbitrarily small $\epsilon_1 > 0$ if $n$ is sufficiently large and if $R < I(U;Y_3) - \tau$, where $\tau \rightarrow 0$ as $|\mathcal{F}|\delta \rightarrow 0$.
Here \eqref{eq:union-bound-2} is by the union bound, and \eqref{eq:jaep2} follows from \cite[Lemma 7.17]{yeung08} as $\boldsymbol{U}(v_3')$ and $\boldsymbol{Y}_3$ are independent.

Hence, if
\vspace{-0.5ex}
\begin{equation}
R < I(U;Y_3), \label{eq:sketch-1}
\end{equation}
for $p(u,y_3)$ defined in \eqref{eq:p-u-y3}, then $\Pr\{ E_2 \}  =\Pr \{ E_0 \cup E_1 \} \leq \Pr\{E_0\} + \Pr\{E_1\} < \epsilon_0 + \epsilon_1$, where $\epsilon_0 + \epsilon_1$ can be made arbitrarily small, i.e., the relay can \emph{reliably} decode $\hat{V}_3=b$.

\noindent{\bf Downlink:}\\
\indent Assuming that the relay has correctly  decoded $\boldsymbol{U}(V_3)$, it re-encodes and broadcasts the index $V_3 \in \{1,2,\dotsc,2^{nR}\}$ to the users in $n$ downlink channel uses. For $n$ sufficiently large, the users can reliably decode $V_3$ if~\cite[p. 567 (Theorem 15.6.3)]{coverthomas06}
\vspace{-0.5ex}
\begin{equation}
R < I(X_3;Y_1) \quad\text{and}\quad R < I(X_3;Y_2), \label{eq:sketch-2}
\end{equation}
for some $p(x_3) p^*(y_1,y_2|x_3)$. Note that linear codes are not required on the downlink.

Assuming node 1 correctly decodes the relay's message $V_3$, knowing its own message $W_1$, it can perform $\boldsymbol{U}(V_3) \oplus (- \boldsymbol{X}_1(W_1))$ to get $W_2$, where $(-\boldsymbol{X}_1)$ is the element-wise additive inverse of $\boldsymbol{X}_1$. Node 2 decodes $W_1$ using a similar method.
Combining \eqref{eq:sketch-1} and \eqref{eq:sketch-2}, we have Theorem~\ref{theorem:achievability}. \hfill $\blacksquare$


\subsection{Functional-Decode-Forward with Systematic Computation Codes (FDF-S)}
An achievable rate region for the discrete memoryless TWRC using FDF can also be obtained by using systematic computation codes~\cite{nazergastpar07} (instead of linear codes) on the uplink. Similar to FDF-L, the relay computes a function of the users' codewords (the function $\boldsymbol{X}_1(W_1) \oplus \boldsymbol{X}_2(W_2)$ can again be chosen) and broadcasts this function back to the users. However, on the uplink, using systematic computation codes, the users first send uncoded transmissions to the relay, followed by a refinement stage in which the users send linear-coded transmissions. We can show that the rate region in the following theorem is achievable for the TWRC.

\begin{theorem}\label{theorem:compute-forward}
Consider a TWRC where $|\mathcal{X}_1| = |\mathcal{X}_2| = |\mathcal{F}|$, for some finite field $\mathcal{F}$. Rename the elements in $\mathcal{X}_1$ and $\mathcal{X}_2$ so that $\mathcal{X}_1 = \mathcal{X}_2 = \mathcal{F}$. The rate pair $(R_1,R_2)$ is achievable if
\begin{align}
R_1 &\leq \left[ \frac{C_\text{MAC}H(W_1)}{C_\text{MAC} + 2H(X_1 \oplus X_2|Y_3)}, I(X_3;Y_1),  I(X_3;Y_2) \right] \nonumber \\
R_2 &\leq \left[ \frac{C_\text{MAC}H(W_2)}{C_\text{MAC} + 2H(X_1 \oplus X_2|Y_3)}, I(X_3;Y_1),  I(X_3;Y_2) \right], \nonumber
\end{align}
for some joint distributions of the form $p^*(y_3|x_1,x_2)p(x_1|w_1)p(x_2|w_2)p(w_1,w_2)$ and $p(x_3)p^*(y_1,y_2|x_3)$. Here, $C_\text{MAC}$ is the maximum sum-rate of the multiple-access channel $p^*(y_3|x_1,x_2)$.
\end{theorem}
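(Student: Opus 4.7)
The plan is to reuse the downlink argument of Theorem~\ref{theorem:achievability} verbatim and to replace only the uplink by the two-phase systematic computation code of~\cite{nazergastpar07}. Specifically, I would split the uplink block into a \emph{systematic} phase of length $m$ and a \emph{refinement} phase of length $n$, so that each user ultimately conveys $m H(W_i)$ source bits over $m+n$ channel uses, giving a net uplink rate of $m H(W_i)/(m+n)$. The downlink remains identical to that of FDF-L: the relay re-encodes the decoded index of $\boldsymbol{X}_1 \oplus \boldsymbol{X}_2$ and broadcasts it, which immediately yields the two bounds $R_i \le I(X_3;Y_1)$ and $R_i \le I(X_3;Y_2)$ via~\cite[Theorem 15.6.3]{coverthomas06}.

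For the systematic phase, each user generates an i.i.d.\ sequence $\boldsymbol{W}_i$ of length $m$ with the appropriate marginal of $p(w_1,w_2)$ and transmits $\boldsymbol{X}_i$ obtained by passing $\boldsymbol{W}_i$ symbol-by-symbol through $p(x_i|w_i)$. By standard joint AEP, the per-symbol residual uncertainty about the modulo-sum $X_1 \oplus X_2$ given the relay's observation $Y_3$ is $H(X_1 \oplus X_2 \mid Y_3)$. The refinement phase then uses the random linear code construction already developed for Theorem~\ref{theorem:achievability}: both users transmit length-$n$ codewords over $\mathcal{F}$ sharing a common generator matrix, carrying enough payload to resolve the $m H(X_1 \oplus X_2 \mid Y_3)$ residual bits. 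Since the computation code of~\cite{nazergastpar07} delivers the modulo sum at a per-user rate approaching $C_\text{MAC}/2$, the refinement is decodable at the relay whenever
\[
\frac{n\, C_\text{MAC}}{2} \ge m\, H(X_1 \oplus X_2 \mid Y_3).
\]
Substituting the minimum feasible $n$ into $R_i = mH(W_i)/(m+n)$ produces the first term in the minimum of the theorem; taking the minimum with the two downlink constraints gives the claimed region.

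The main obstacle will be tying the two uplink phases together in a single error analysis. The Markov-Lemma step used in Theorem~\ref{theorem:achievability} to lift joint typicality of $(\boldsymbol{X}_1,\boldsymbol{X}_2,\boldsymbol{Y}_3)$ to that of $(\boldsymbol{U},\boldsymbol{Y}_3)$ must be re-derived for the auxiliary input law $p(w_1,w_2)p(x_1|w_1)p(x_2|w_2)$ rather than uniform inputs, and the decoder must simultaneously exploit both the phase-1 observations and the linear phase-2 codeword. In particular, I expect the delicate part to be verifying that the refinement linear code (with its shared generator and dithers $\boldsymbol{q}_1,\boldsymbol{q}_2$ as in Lemma~\ref{lemma:two-linear-codes}) can be tuned, via the field size and rate allocation, so that the union bound over all composite error events -- disagreement with typicality in either phase -- still vanishes as $m,n \to \infty$ at the fixed ratio $n/m = 2H(X_1 \oplus X_2 \mid Y_3)/C_\text{MAC}$.
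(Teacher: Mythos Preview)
Your approach is essentially identical to the paper's: invoke Nazer--Gastpar's systematic computation code result~\cite[Theorem~2]{nazergastpar07} with $V=X_1\oplus X_2$ for the uplink, and combine with the broadcast bound~\cite[Theorem~15.6.3]{coverthomas06} for the downlink. The paper in fact omits the proof entirely and simply cites these two results, noting only that the extra $H(W_i)$ factors convert Nazer--Gastpar's computation rates into the bits-per-channel-use rates used here.

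One point you should clean up: your description of the refinement phase conflates two different mechanisms. You say it ``uses the random linear code construction already developed for Theorem~\ref{theorem:achievability}'' (shared $\mathbb{G}$, dithers $\boldsymbol q_i$, relay decodes only the sum), yet you justify its rate via the MAC sum-rate constraint $nC_\text{MAC}/2 \ge mH(X_1\oplus X_2\mid Y_3)$. These are not the same thing. In Nazer--Gastpar's systematic scheme the refinement phase uses \emph{ordinary} MAC coding: each user sends a random linear hash of its phase-1 sequence $\boldsymbol X_i$, and the relay decodes \emph{both} hashes individually---hence the $C_\text{MAC}$ bottleneck and the factor~$2$ in the denominator. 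Only after recovering both hashes does the relay form their $\oplus$-sum, which by linearity of the hash is a hash of $\boldsymbol X_1\oplus\boldsymbol X_2$ and resolves the residual list from phase~1. If instead you ran the Theorem~\ref{theorem:achievability} computation code in the refinement phase, the governing rate would be $I(U;Y_3)$ under uniform inputs, not $C_\text{MAC}/2$, and you would not recover the stated formula. Once this is corrected, the Markov-Lemma and union-bound worries you flag largely evaporate: there is no need to re-derive the Nazer--Gastpar error analysis for the non-uniform law $p(w_1,w_2)p(x_1|w_1)p(x_2|w_2)$, since the paper treats~\cite[Theorem~2]{nazergastpar07} as a black box.
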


\begin{remark}
The above result is also valid even for TWRCs where $\mathcal{X}_1$ and $\mathcal{X}_2$ are not finite fields. See Remark~\ref{remark:non-finite-field}.
\end{remark}

The above rate region is obtained using the results in \cite[Theorem 2]{nazergastpar07} (by setting $V=X_1\oplus X_2$) and \cite[p. 567 (Theorem 15.6.3)]{coverthomas06}. The additional factors $H(W_1)$ and $H(W_2)$ in the above equations compared to \cite[Eqn. (23)]{nazergastpar07} convert computation rates to rates in bits/channel use considered in this paper. The proof is omitted because of space constraints.

\section{Comparison of Coding Strategies}
 \label{sec:comparison}

In this section, we show that the maximum sum rate obtained by FDF-L can be simultaneously higher than those achievable by FDF-S, and by two existing coding strategies: CDF and CF.

\subsection{Existing Coding Strategies}

\subsubsection{Complete-Decode-Forward (CDF)}
Using CDF, the relay completely decodes the messages $W_1$ and $W_2$ sent by users 1 and 2 respectively.  It then encodes and broadcasts a function of the messages to the users such that each user can recover the message sent by the other user. The overall achievable rate region is thus limited by two sets of constraints, i.e., the multiple-access constraints~\cite{ahlswede71,liao72} on the uplink and the broadcast constraints~\cite[Theorem 2.5]{oechteringschnurr08} on the downlink, and is given in the following theorem.

\begin{theorem}\label{theorem:cdf}[see~\cite{knopp06,kimmitrantarokh08,namchung08}]
Consider a TWRC. The rate pair $(R_1,R_2)$ is achievable using CDF if
\begin{align}
R_1 &\leq \min \Big\{ I(X_1;Y_3|X_2) , I(X_3;Y_2) \Big\}\\
R_2 &\leq \min \Big\{ I(X_2;Y_3|X_1) , I(X_3;Y_1) \Big\}\\
R_1 + R_2 &\leq I(X_1,X_2;Y_3), \label{eq:complete-decode-forward-sum-rate}
\end{align}
for some joint distributions of the form
$p(x_1)p(x_2)p^*(y_3|x_1,x_2)$ and
$p(x_3)p^*(y_1,y_2|x_3)$.
\end{theorem}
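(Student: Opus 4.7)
The plan is to combine two well-known achievability arguments in sequence: a multiple-access stage on the uplink, in which the relay recovers both messages $W_1$ and $W_2$, followed by a broadcast-with-receiver-side-information stage on the downlink, in which the relay enables each user to recover the other's message. Because the $n$ uplink uses and $n$ downlink uses must both support the same rate pair $(R_1, R_2)$, the final region is the intersection of the two sets of constraints, taken as a union over admissible input distributions.

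On the uplink I would have each user $i$ draw $2^{nR_i}$ codewords i.i.d.\ according to $p(x_i)$ and transmit the one indexed by $W_i$. The relay performs joint typicality decoding, looking for a unique pair $(\hat{W}_1, \hat{W}_2)$ with $(\boldsymbol{X}_1(\hat{W}_1), \boldsymbol{X}_2(\hat{W}_2), \boldsymbol{Y}_3)$ jointly typical. The standard three-way error-event analysis of \cite{ahlswede71,liao72} (correct $W_1$ with wrong $W_2$, correct $W_2$ with wrong $W_1$, and both wrong) together with a union bound then yields the MAC constraints $R_1 \leq I(X_1;Y_3|X_2)$, $R_2 \leq I(X_2;Y_3|X_1)$, and $R_1 + R_2 \leq I(X_1,X_2;Y_3)$ under any product distribution $p(x_1)p(x_2)$.

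On the downlink, the relay (which has decoded $(W_1,W_2)$ with probability approaching one) faces a two-receiver broadcast problem where user 1 already has $W_1$ as side information and wants $W_2$, and user 2 already has $W_2$ and wants $W_1$. Invoking \cite[Theorem 2.5]{oechteringschnurr08}, I would use a single i.i.d.\ $p(x_3)$ codebook indexed by the message pair: in the symmetric case $R_1=R_2$ one can simply broadcast $W_1 \oplus W_2$, while in the asymmetric case a nested index-coding construction is used. Each user decodes by joint typicality while exploiting its own message as side information, yielding $R_1 \leq I(X_3;Y_2)$ (user 2 recovers $W_1$) and $R_2 \leq I(X_3;Y_1)$ (user 1 recovers $W_2$). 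No joint sum-rate constraint appears on the downlink because each receiver's side information is precisely what the other user needs.

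The main obstacle is the asymmetric downlink: the plain XOR trick covers only $R_1 = R_2$, and for $R_1 \neq R_2$ one must show that a single broadcast code of rate roughly $\max(R_1,R_2)$ can be shaped so that each receiver's side information disambiguates to the intended message within its own rate bound. This is precisely what the cited Oechtering--Schnurr construction accomplishes, so I would cite it rather than reproduce it. Intersecting the uplink and downlink constraints and taking the union over product distributions $p(x_1)p(x_2)$ and marginals $p(x_3)$ then yields the region stated in Theorem~\ref{theorem:cdf}.
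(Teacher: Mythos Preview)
Your proposal is correct and matches the paper's treatment exactly: the paper does not give a self-contained proof of Theorem~\ref{theorem:cdf} but simply cites \cite{knopp06,kimmitrantarokh08,namchung08} and explains that the region is the intersection of the multiple-access constraints \cite{ahlswede71,liao72} on the uplink with the broadcast-with-side-information constraints \cite[Theorem 2.5]{oechteringschnurr08} on the downlink, which is precisely the decomposition you outline.
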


\subsubsection{Compress-Forward (CF)}
Using this strategy the relay quantizes its received signal $Y_3$ to $\hat{Y}_3$, encodes and broadcasts $\hat{Y}_3$ to the users. Assuming that both users can correctly decode $\hat{Y}_3$, a virtual channel $X_1 \rightarrow \hat{Y}_3$ is created from user 1 to user 2 via the relay. Similarly, a virtual channel $X_2 \rightarrow \hat{Y}_3$ is created from user 2 to user 1 via the relay. The achievable rate region using CF on the TWRC is given in the following theorem.

\begin{theorem}\label{theorem:cf}[see~\cite{schnurroechtering07}]
Consider a TWRC. The rate pair $(R_1,R_2)$ is achievable using CF if $R_1 \leq I(X_1;\hat{Y}_3|X_2,T)$ and $R_2 \leq I(X_2;\hat{Y}_3|X_1,T)$,
under the constraints $I(Y_3;\hat{Y}_3|X_1,T) < I(X_3;Y_1)$ and $I(Y_3;\hat{Y}_3|X_2,T) < I(X_3;Y_2)$, for some joint distributions of the form $p(t)p(x_1|t)p(x_2|t)p^*(y_3|x_1,x_2)p(\hat{y}_3|y_3)$ and $p(x_3)p^*(y_1,y_2|x_3)$, where $T \in \mathcal{T}$, $|\mathcal{T}| \leq 4$, $\hat{Y}_3 \in \hat{\mathcal{Y}}_3$, and  $|\hat{\mathcal{Y}}_3| \leq |\mathcal{Y}_3|+3$.
\end{theorem}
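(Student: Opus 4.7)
The plan is to combine Wyner--Ziv source coding with binning on the uplink and broadcast transmission of the bin index on the downlink. The relay quantises its observation $\boldsymbol{Y}_3$ into a reconstruction $\hat{\boldsymbol{Y}}_3$, bins the quantisation codewords, and transmits the bin index on the downlink; each user $i$ decodes the bin index and then exploits its own transmitted signal $\boldsymbol{X}_i$ as Wyner--Ziv side information to resolve the correct $\hat{\boldsymbol{Y}}_3$ within the bin, and finally decodes the partner's message by joint-typicality decoding based on $(\boldsymbol{X}_i,\hat{\boldsymbol{Y}}_3,\boldsymbol{T})$.

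Codebook generation and encoding. Fix a distribution of the given form and draw a time-sharing sequence $\boldsymbol{t}$ i.i.d.\ from $p(t)$, revealed to every node. Generate $2^{nR_i}$ codewords $\boldsymbol{x}_i(w_i)$ conditionally i.i.d.\ given $\boldsymbol{t}$ for $i\in\{1,2\}$. At the relay, generate $2^{n\tilde{R}}$ quantisation codewords $\hat{\boldsymbol{y}}_3(l)$ conditionally i.i.d.\ given $\boldsymbol{t}$, and uniformly assign each $l$ to one of $2^{nR_0}$ bins; also generate $2^{nR_0}$ downlink codewords $\boldsymbol{x}_3(b)$ i.i.d.\ according to $p(x_3)$. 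After observing $\boldsymbol{Y}_3$, the relay selects an index $l$ with $(\boldsymbol{t},\boldsymbol{Y}_3,\hat{\boldsymbol{y}}_3(l))$ jointly typical, which succeeds with high probability if $\tilde{R} > I(Y_3;\hat{Y}_3|T)$, and transmits $\boldsymbol{x}_3(b)$ where $b$ is the bin index of $\hat{\boldsymbol{y}}_3(l)$.

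Decoding and rate constraints. User $i$ first recovers $b$, which works with high probability if $R_0 < I(X_3;Y_i)$, and then searches its bin for the unique $\hat{\boldsymbol{y}}_3$ jointly typical with $(\boldsymbol{t},\boldsymbol{X}_i)$, which succeeds with high probability if $\tilde{R} - R_0 < I(\hat{Y}_3;X_i|T)$. Eliminating $\tilde{R}$ and using the Markov chain $X_i - (Y_3,T) - \hat{Y}_3$, which holds because $\hat{Y}_3$ is drawn only according to $p(\hat{y}_3|y_3)$, yields the compression constraint $I(Y_3;\hat{Y}_3|X_i,T) < I(X_3;Y_i)$ for each $i\in\{1,2\}$. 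Having recovered $\hat{\boldsymbol{Y}}_3$, user 1 declares $\hat{w}_2$ as the unique index making $(\boldsymbol{t},\boldsymbol{X}_1,\boldsymbol{X}_2(\hat{w}_2),\hat{\boldsymbol{Y}}_3)$ jointly typical, which by a standard joint-typicality argument succeeds whenever $R_2 < I(X_2;\hat{Y}_3|X_1,T)$; user 2 recovers $\hat{w}_1$ symmetrically under $R_1 < I(X_1;\hat{Y}_3|X_2,T)$.

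Main obstacle and cardinality bounds. The subtlest step is the simultaneous binning argument: a single quantisation and a single bin partition must serve both users even though their side informations $\boldsymbol{X}_1$ and $\boldsymbol{X}_2$ differ, so one must verify that the two Wyner--Ziv constraints and the two downlink capacity constraints can be met for a common pair $(R_0,\tilde{R})$. The cardinality bounds $|\mathcal{T}|\leq 4$ and $|\hat{\mathcal{Y}}_3|\leq|\mathcal{Y}_3|+3$ then follow from standard Carath\'eodory-type arguments that preserve the relevant rate functionals and marginals while shrinking the alphabets of the auxiliary and time-sharing variables.
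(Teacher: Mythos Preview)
The paper does not prove Theorem~\ref{theorem:cf}; it simply quotes the result from~\cite{schnurroechtering07}. Your sketch is therefore not being compared against anything in the paper itself, but it is worth assessing on its own merits because it follows the standard compress--forward template that the cited reference uses.

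There is a real gap precisely where you flag the ``main obstacle''. With a \emph{single} bin partition of rate $R_0$ serving both users, the Fourier--Motzkin elimination of $(\tilde R,R_0)$ does \emph{not} produce the two decoupled constraints $I(Y_3;\hat Y_3|X_i,T)<I(X_3;Y_i)$. Writing $a_i=I(\hat Y_3;X_i|T)$, $b_i=I(X_3;Y_i)$ and $c=I(Y_3;\hat Y_3|T)$, your scheme needs a common pair $(\tilde R,R_0)$ with $\tilde R>c$, $R_0<\min(b_1,b_2)$ and $\tilde R-R_0<\min(a_1,a_2)$, which is feasible iff
\[
\max_i\, I(Y_3;\hat Y_3|X_i,T)\;<\;\min_i\, I(X_3;Y_i).
\]
This is strictly stronger than the theorem's pair of constraints whenever the two downlinks and the two side informations are asymmetric (e.g.\ $a_1=1,a_2=2,b_1=2,b_2=1,c=2.5$ satisfies $c-a_i<b_i$ for both $i$ but fails the min--max condition). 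To reach the full region of~\cite{schnurroechtering07} you need an additional ingredient: either two (nested) bin assignments---a coarse one decodable over the weaker downlink and a refinement over the stronger one---together with an appropriate broadcast code for the two bin indices, or an equivalent joint source--channel argument on the downlink. Identifying the obstacle is good, but the proof as written does not close it; the remaining step is not just a verification but a different binning/broadcast construction.
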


\subsection{Numerical Calculations}

Now, we compare these four coding strategies on a TWRC.

\subsubsection{Channel}

We consider the following TWRC:
\begin{itemize}
\item $\mathcal{X}_1 = \mathcal{X}_2 = \mathcal{X}_3 = \mathcal{Y}_1 = \mathcal{Y}_2 = \{0,1\}$, $\mathcal{Y}_3 = \{a,b,c,d\}$.
\item $p^*(y_3|x_1,x_2)$ is given by the following transition matrix:

\vspace{0.3ex}
\begin{small}
\begin{tabular}{c | c c c c}
$p^*(y_3|x_1,x_2)$  & $y_3=a$ & $y_3=b$ & $y_3=c$ & $y_3=d$ \\
\hline
$x_1=0, x_2=0$ & 0.1 & 0.2 & 0.3 & 0.4\\
$x_1=0, x_2=1$ & 0.3 & 0.4 & 0.1 & 0.2\\
$x_1=1, x_2=0$ & 0.4 & 0.3 & 0.2 & 0.1\\
$x_1=1, x_2=1$ & 0.2 & 0.1 & 0.4 & 0.3
\end{tabular}
\end{small}.
\vspace{0.2ex}

Each entry in the lower right matrix denotes the conditional probability $p^*(y_3|x_1,x_2)$ that $y_3$ is received when $(x_1,x_2)$ are sent. Note that $p^*(y_3|x_1,x_2)$ cannot be written as a noisy function of $x_1\oplus x_2$. 
\item $p^*(y_1,y_2|x_3) = \text{BSC}_{Y_1|X_3} (0.3)\text{BSC}_{Y_2|X_3} (0.3)$, where
\vspace{-1ex}
\begin{equation}
\text{BSC}_{Y|X}(\rho) \triangleq p(y|x)= \begin{cases}
1-\rho, & \text{if } y=x\\
\rho, &\text{otherwise}
\end{cases}.
\end{equation}
The downlink from the relay to each user is a binary-symmetric channel with cross-over probability $\rho=0.3$.
\end{itemize}

\subsubsection{Achievable Sum Rates}
The maximum sum rates (i.e., $R_1+R_2$) achievable by the different coding strategies are
\begin{itemize}
\item FDF-L: $R_1+R_2=0.2374$.
\item FDF-S: $R_1+R_2=0.1602$.
\item CDF: $R_1+R_2=0.1536$.
\item CF (an upper bound on the maximum sum rate): $R_1+R_2 \leq 0.2374 - \zeta$, for some $\zeta > 0$.
\end{itemize}
Clearly, FDF-L outperforms the other coding strategies on this TWRC.

\section{Conclusion} \label{sec:conclusion}

We have proposed a functional-decode-forward coding strategy with linear codes (FDF-L) for the general discrete memoryless two-way relay channel (TWRC) and obtained a new achievable rate region. We showed that using random linear codes for the users, the relay can reliably decode a function of the users' codewords even when the channel does not perform the desired function. The function, when broadcast back to the users, allows each user to decode the other user's message.

Noting that functional decoding on the uplink of the discrete memoryless TWRC is also possible using systematic computation codes, we obtained another achievable region for the TWRC using functional-decode-forward with systematic computation codes (FDF-S).

With an example, we numerically showed that FDF-L is capable of achieving strictly higher sum rates compared to FDF-S and two existing coding strategies, namely, complete-decode-forward and compress-forward.

However, using FDF-L or FDF-S, if the cardinalities of the user's input alphabets $|\mathcal{X}_i|$ are both not equal to that of any finite field, only subsets of $\mathcal{X}_i$ are utilized for transmission. Furthermore, since linear codes are used for FDF-L, the distributions of the users' transmitted signals $p(x_i)$ are constrained to be uniform, which is not always optimal for the channel. 

This paper nonetheless provides coding schemes for the relay to decode a function of the users' messages without having to decode the messages individually on the general discrete memoryless TWRC (which may not be additive). This strategy can be useful in multiterminal networks where different destination nodes have knowledge of some source messages and want to decode the messages of other sources.



\end{document}